\documentclass[conference,compsoc,anonymous,review]{IEEEtran}
\ifCLASSOPTIONcompsoc
  \usepackage[nocompress]{cite}
\else
  \usepackage{cite}
\fi
\ifCLASSINFOpdf
\else
\fi

\usepackage{tikz}
\usepackage{amsmath}
\usepackage{mathtools}
\usepackage{filecontents}
\usepackage{subcaption}
\usepackage{comment}
\usepackage{paralist}
\usepackage{url}
\usepackage{hyperref}
\usepackage{amsthm}
\usepackage{algorithm}
\usepackage{algorithmic}
\usepackage{booktabs}
\usepackage{xspace}
\usepackage{booktabs} 
\usepackage{multirow}
\usepackage{bm}
\newcommand{\bheading}[1]{\vspace{1pt}\noindent{\textbf{#1}}}

\newcommand{\stash}{\mathsf{stash}}
\newcommand{\gpu}{\mathsf{gpu}}
\newcommand{\tok}{\mathsf{tok}}

\newcommand{\prf}{\mathsf{PRF}}
\newcommand{\opath}[1]{\mathcal{P}(#1)}
\newcommand{\otro}{\textsf{OTRO}}

\newcounter{note}[section]

\newcounter{notesec}[section]

\newcommand{\ie}{\emph{i.e.}\xspace}
\newcommand{\eg}{\emph{e.g.}\xspace}
\newcommand{\aka}{\textit{a.k.a.}\xspace}
\newcommand{\adv}{\mathcal{A}}

\newenvironment{packeditemize}{
\begin{list}{$\bullet$}{
\setlength{\labelwidth}{0pt}
\setlength{\itemsep}{2pt}
\setlength{\leftmargin}{\labelwidth}
\addtolength{\leftmargin}{\labelsep}
\setlength{\parindent}{0pt}
\setlength{\listparindent}{\parindent}
\setlength{\parsep}{1pt}
\setlength{\topsep}{1pt}}}{\end{list}}

\newtheoremstyle{boldthm}
  {3pt}    
  {3pt}    
  {\itshape} 
  {}       
  {\bfseries} 
  {.}      
  { }      
  {}       

\theoremstyle{boldthm}
\newtheorem{theorem}{Theorem}

\newtheorem{lemma}{Lemma}

\theoremstyle{definition}

\theoremstyle{remark}

\begin{document}
%
\title{OTRO: Oblivious Tokenization Path with Square-Root ORAM}

 \author{
    \IEEEauthorblockN{
        Jonghyun Lee\IEEEauthorrefmark{1}, 
        Yongqin Wang\IEEEauthorrefmark{2}, 
        Rachit Rajat\IEEEauthorrefmark{3}, 
        Daniel Wong\IEEEauthorrefmark{4}, 
        Mengyuan Li\IEEEauthorrefmark{5}, and 
        Murali Annavaram\IEEEauthorrefmark{1}
    }
    \IEEEauthorblockA{
        \IEEEauthorrefmark{1}Ming Hsieh Dept. of Electrical and Computer Engineering, University of Southern California, \{leejongh, annavara\}@usc.edu\\
        \IEEEauthorrefmark{2}Roblox, yongqinwang@roblox.com\\
        \IEEEauthorrefmark{3}NVIDIA, rrajat@nvidia.com\\
        \IEEEauthorrefmark{4}Dept. of Electrical and Computer Engineering, University of California, Riverside, daniel.wong@ucr.edu\\
        \IEEEauthorrefmark{5}Thomas Lord Dept. of Computer Science, University of Southern California, mli49061@usc.edu
    }
}


%


\maketitle
\thispagestyle{plain}
\pagestyle{plain}

\begin{abstract}
The CPU-side large language model (LLM) tokenizer is a critical security gap in LLM serving through a confidential computing stack with CPU and GPU trusted execution environments (TEEs). Tokenizers converts the prompts through table-driven lookups, and the resulting memory access patterns are a powerful source of side-channel leakage. Recent work demonstrates end-to-end recovery of user prompts from tokenizer access pattern on production Intel TDX. However, a drop-in use of the popular tree-based Oblivious RAMs (e.g., PathORAM) to prevent access-pattern leakage introduces $\sim$13$\times$ tokenizer slowdown, resulting in 10--58\% higher time-to-first-token (TTFT).

In this paper, we present \otro{}, an efficient, oblivious tokenization path tailored to latency-critical LLM serving. \otro{} relies on square-root ORAM for fast single-access lookups, but avoids its prohibitive $O(N\log^2N$) rebuild cost every $\sqrt{N}$ accesses through three key innovations. First, \otro{} provides a pool of replicated square-root ORAM instances that utilize the read-only nature of tokenizer table. Second, an epoch-based rotation policy decouples accesses from rebuilds and pads each epoch with dummy accesses to its boundaries, minimizing observable information. Lastly, chunked KV-cache-aware tokenization further overlaps rebuilds with GPU prefill and minimizes the instance count. Implemented as modules in HuggingFace Tokenizers and nano-vLLM, running within a TDX-enabled CVM with an NVIDIA H100 GPU, \otro{} limits TTFT overhead to at most 4.5\%, keeps tokenizer-induced latency under 10\% of total TTFT, and adds less than 0.5\,GB of memory overhead while reducing the tokenizer's observable leakage across various model families and sizes. 

\end{abstract}

%
\IEEEpeerreviewmaketitle

\section{Introduction}

Large language models (LLMs) are ubiquitous in everyday tools and services, powering interactive assistants, chatbots, and other conversational AI systems. As LLMs are often used to process highly sensitive inputs, including proprietary documents, personal communications, and security-critical prompts, naively entrusting LLM usage to a cloud service provider (CSP) creates various privacy concerns~\cite{weiss2024your,hayase2024data,gao2025know,zhang2024time,carlini2024stealing}. 
Ensuring that user prompts and AI-generated responses remain unobservable to CSP or other malicious users on CSP (thus, \textbf{confidential}) during \textit{cloud-based} LLM inference is paramount for maintaining user trust and meeting legal and ethical data protection standards.

\bheading{The Emergence of Confidential LLM Serving.} 
To address these concerns, LLM serving is rapidly adopting Trusted Execution Environments (TEEs)~\cite{amdcc,amdsevsnp,tdx}. TEEs provide hardware-isolated environments that encrypt memory and restrict address access requests, ensuring that computations inside a confidential virtual machine (CVM) remain inaccessible even to the cloud provider. CVMs protect model weights, KV caches, and user prompts, and can be verified through remote attestation~\cite{intel:2020:tdx_module,amdsevsnp,tdx}.
Building on CVMs, GPU vendors have also introduced GPU TEE support (also called confidential computing) tailored to large-scale confidential LLM inference~\cite{microsoft_2025_azure}, including NVIDIA Hopper H100/H200~\cite{nvidia:2023:h100} and Blackwell (e.g., B200~\cite{nvidia_confidential_computing}). 

With this hardware foundation maturing, major technology companies are beginning to assemble complete confidential LLM inference stacks; Microsoft, Meta, and Google are integrating CPU and GPU TEEs into end-to-end frameworks that enforce privacy guarantees~\cite{meta2025privateprocessing,microsoft_2025_azure,google:2025:gputee}. For example, Meta’s recently announced “Private Processing” framework ~\cite{meta2025privateprocessing} uses confidential computing to deploy LLM-powered features (e.g., summarization of WhatsApp messages~\cite{whatsapp_private_processing}) while ensuring that sensitive user data remains invisible to the underlying cloud provider and even Meta itself. 
Taken together, these efforts signal the emergence of a new baseline: an LLM serving pipeline in which confidentiality is guaranteed by hardware-enforced TEEs. Under this model, \textit{a malicious cloud LLM operator or hypervisor cannot recover the user prompt, intermediate states, or generated responses}, achieving end-to-end privacy for practical large-scale deployments. \footnote{Denial-of-service attacks fall outside the scope of confidential LLM serving, as they degrade availability rather than confidentiality and can be readily observed by end users.}

\bheading{Tokenizer: A Demonstrated Source of Leakage in Confidential LLM Serving.} 
However, even with CVMs, one component of the LLM serving pipeline remains a source of information leakage: \underline{tokenizer}. A tokenizer splits the prompt into a sequence of short vocabulary terms (called tokens). Tokenizer's memory access patterns, if observed by adversaries, can leak the prompts (more details about the adversaries in Section~\ref{subsec:threat})~\cite{li2021cipherleaks,morbitzer:2018:severed,werner:2019:severest,yarom:2014:flush+,dall:2018:cachequote,gotzfried:2017:cache,van:2017:sgx,yan2025relocate,yuan2024hypertheft,li2022systematic}. CVMs only protect memory content and the computations on data but they do not protect against address pattern leakage. 
In this work, we target the byte-pair encoding (BPE)-style sub-word tokenizers, which are widely used in many modern decoder-only LLMs~\cite{gemma_2025,llama3modelcard,abdin2024phi3technicalreporthighly,qwen3technicalreport}. 
BPE tokenizers use a vocabulary and merge table, together with hash-based dictionary lookups, to map raw prompt text into a sequence of token identifiers (tokenIDs). The tokenizer relies on a hash map with a randomized seed chosen at initialization of the LLM serving process. Once constructed, the mapping from tokens to vocabulary table entries (hence the memory addresses) remains fixed for the lifetime of the inference engine. A malicious/compromised hypervisor can leverage page faults and cache side channels to correlate observed address traces with lookups under this fixed mapping for a given prompt (more details in Section~\ref{sec:motivation:case}). Without additional protection, the address patterns of tokenizer memory accesses can leak the prompt’s content, even though data remains encrypted in TEE memory. This attack is different from token-length based attacks~\cite{weiss2024your, zhang2025time}, where an adversary infers prompt characteristics from prompt length. In contrast, address pattern leakage enables deterministic prompt reconstruction. 

LLM tokenizer is not a hypothetical attack vector. Recent work, TDXRay~\cite{hornetztdxray}, demonstrates an end-to-end prompt-reconstruction attack on Intel TDX by tracking the tokenizer memory accesses, achieving over 90\% reconstruction similarity across different model families. The tokenizer is therefore a high-fidelity prompt-reconstruction channel that survives the TEE protections deployed to prevent it. TDXRay discuss oblivous map as a mitigation ncadidate but finds the straightforward implementation prohibitive on the tokenizer path (53--369$\times$ overhead), leaving an efficient oblivious tokenizer an open problem.

Prior works~\cite{laoram,hashemi2022data,umar2025efficient} demonstrated side-channel leakage in embedding layers on recommendation systems and LLMs, showing that memory access patterns in embedding table lookups can reveal sensitive inputs. Embedding layers, however, operate on tokenIDs generated by tokenization, whereas the tokenizer directly processes raw user prompts before any model computation occurs. Leakage at the tokenizer stage, therefore, exposes fine-grained user content at the earliest stage in the LLM pipeline. Protecting against tokenizer leakage constitutes a distinct systems problem with different architectural constraints from embedding protection.

\subsection{\otro: Efficient and Oblivious Tokenization Path}
In this work, we propose \otro{}, an efficient and secure Oblivious Tokenization Path tailored to confidential LLM serving.
Building on Oblivious RAM (ORAM), which hides access patterns by making any two logical access sequences of the same length computationally indistinguishable~\cite{oram-orig,pathoram,ringoram}, we investigate whether ORAM-based tokenizer protection can be made practical at the system scale. 
Existing tree-based ORAM, such as PathORAM~\cite{pathoram}, is a widely adopted baseline for general-purpose ORAM with strong security guarantees. But tree-based ORAMs incur significant overhead for randomizing tokenizer accesses, as we demonstrate later. Tree-based ORAMs organize $N$ entries into a logarithmic-depth hierarchy causing only poly-logarithmic overhead in theory. However, when embedded into tokenizer lookups, each token access must traverse an entire ORAM path and perform complex eviction. In our measurements, this translates into up to 58\% higher TTFT relative to a non-oblivious baseline, unacceptable for latency-critical inference. 

\begin{figure}[t]
    \centering
    \begin{subfigure}{\linewidth}
        \centering
        \includegraphics[width=\linewidth]{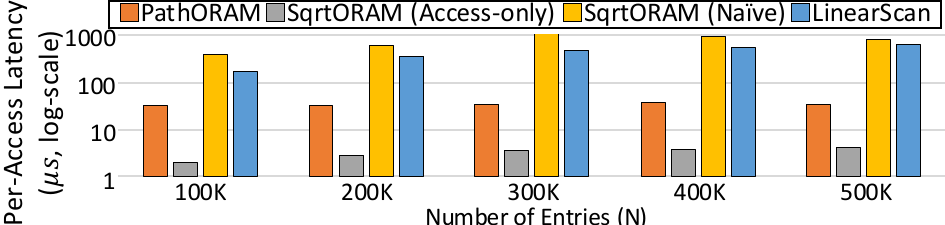}
        \caption{Per-access latency of 32-bit elements per data block.}
        \label{fig:latency}
    \end{subfigure}
    \hfill
    \begin{subfigure}{\linewidth}
        \centering
        \includegraphics[width=\linewidth]{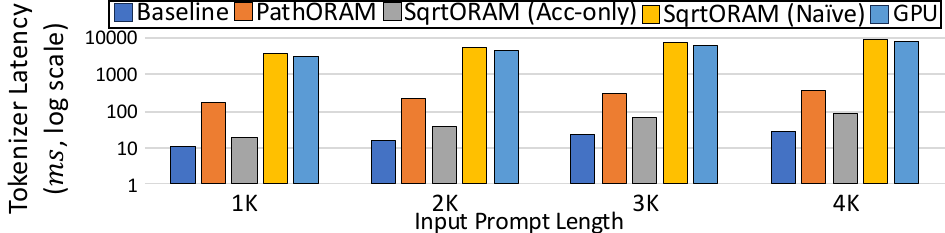}
        \caption{End-to-end tokenizer latency in ms (log-scale).}
        \label{fig:latency_wgpu}
    \end{subfigure}
    \caption{Latency of ORAM and GPU-based tokenizers.}
    \label{fig:latency_overview}
\end{figure} 

\bheading{Naive Defenses Fall Short. } Figure~\ref{fig:latency_overview} quantifies the cost of naive protection. At the per-access level (Figure~\ref{fig:latency}), Square-root ORAM (SqrtORAM)'s single-block lookup averages only 3.2 $\mu s$, 10 $\times$ faster than PathORAM. When integrated into HuggingFace's Tokenizers (Figure~\ref{fig:latency_wgpu}), PathORAM and naive SqrtORAM incur 13$\times$ and 323$\times$ tokenizer slowdowns respectively; the former from full tree path traversals with small lookups per tree node, the latter from rebuild cost dominating the runtime. Even GPU offloading shows a 276 $\times$ slowdown, as BPE's greedy merge causes severe control-flow divergence that defeats SIMD parallelism. 
Among all these, SqrtORAM stands out for one reason. Isolating its per-access cost from rebuilds reveals only 2.4$\times$ overhead over baseline without any protection, which shows that the rebuild, not the oblivious lookup itself, is what makes naive SqrtORAM 323$\times$ slower. This motivates our key question, \textit{can rebuild cost be removed from the critical path entirely?} 

Our \textit{key observation} is that the tokenizer tables are read-only structures during inference. Square-root ORAM (SqrtORAM) is an existing approach that offers near-unprotected-baseline per-access cost with one drawback: SqrtORAM suffers from expensive $O(N \log^2 N)$ rebuilds every $\sqrt{N}$ accesses. \otro{} eliminates rebuild overhead by constructing \emph{a pool of replicated read-only SqrtORAM instances} and decoupling accesses from rebuilds via an \emph{epoch-based rotation policy}. Each instance, after serving  $\sqrt{N}$ accesses (an epoch), goes offline for oblivious rebuild, while subsequent tokenizer requests are routed to a fresh instance. \otro{} leaves tokenizer access patterns provably independent of the prompt content under a DRAM-trace adversary, revealing nothing about which vocab or merge entries are touched. Additionally \otro{} pads each epoch to its $\sqrt{N}$ boundary with dummy accesses, reducing the observable transcript to input prompt length, which is unavoidable in any ORAM design.

To further reduce the number of instances in a pool and hide rebuilds from the critical path, \otro{} introduces \emph{chunked tokenization} which overlaps the prefill of prompt chunks with the rebuild phase via incremental prefill over a shared KV cache. Together, these techniques convert the bursty rebuild cost of SqrtORAM into amortized background work, keeping ORAM-induced stalls off the LLM pipeline’s critical path and yielding an oblivious tokenizer that closely tracks non-oblivious TTFT. Our key contributions are:

\begin{packeditemize}
\item \textbf{Cost characterization of oblivious tokenization.} We implement drop-in oblivious defense on the tokenizer. PathORAM and naive SqrtORAM are prohibitive (13$\times$ and 323$\times$), but rebuild-free SqrtORAM tracks baseline, which has no protection, by under 2.4$\times$. Since the rebuild is the bottleneck, \otro{} keeps it off the critical path while retaining SqrtORAM-based protection.

\item \textbf{Oblivious tokenization architecture.} \otro{} introduces a pool of read-only SqrtORAM instances with an epoch-based rotation strategy and access-count padding, leveraging the static nature of tokenizer tables to remove rebuilds from the critical path. 

\item \textbf{Latency-aware ORAM integration.} 
We integrate \otro{} with chunked tokenization to keep rebuilds off the critical path and reduce the memory footprint. \otro{} overlaps ORAM maintenance with GPU prefill by chunking the user prompt and using KV-cache-aware incremental decoding. This integration ensures that SqrtORAM rebuilds remain off the LLM pipeline’s critical path and reduces the number of SqrtORAM instances.

\item \textbf{End-to-end prototype.} We implement and evaluate \otro{} in a real-world confidential serving stack. We extend HuggingFace Tokenizers and nano-vllm to support \otro{} inside a TDX-enabled CVM with an NVIDIA H100 GPU. Our evaluation shows that \otro{} only limits TTFT overhead to at most 4.5\% and adds less than 0.5\,GB of memory, significantly outperforming PathORAM and naive SqrtORAM baselines.

\item \textbf{Reducing leakage bound.} We prove that \otro{}'s access patterns reveal nothing about which tokenizer entries are touched. Access-count padding then removes the total access count from observable signals leaving input length as the only unavoidable leakage.

\end{packeditemize}

\section{Background}
\label{sec:back}

\subsection{Trusted Execution Environment}
Trusted Execution Environment (TEE, \aka, confidential computing) is a hardware-protected \textit{enclave} that executes code and processes data with confidentiality and integrity via memory encryption and isolation ~\cite{kaplan:2016:sevWpaper,intel:2017:tme,intel:2020:tdx}, protecting against privileged software (e.g., a malicious hypervisor) and physical attack. 

\bheading{Confidential Virtual Machine (CVM).} CVMs extend TEE protection to an entire VM, securing both the guest OS and its applications within a hardware-protected boundary~\cite{intel:2020:tdx,amd:2019:sevapi,arm:2021:cca}. Unlike process-scoped enclaves (e.g., Intel SGX~\cite{costan:2016:intel}), CVMs avoid enclave size limits and paging overheads, supporting multi-TB memory for data-intensive workloads in most server-class CPUs~\cite{amd:2019:manual,david:2019:sevsnp,kaplan:2017:seves,intel:2020:tdx,intel:2020:tdx_module,arm:2021:cca}. 

\begin{figure}[t]
    \centering
    \begin{subfigure}{\columnwidth}
        \centering
        \includegraphics[width=\linewidth]{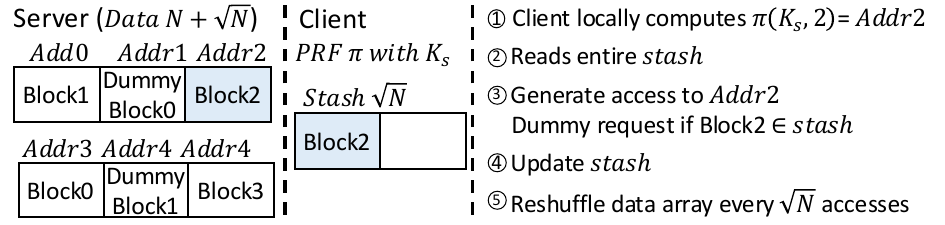}
        \caption{Square-root ORAM ($N=4$).}
        \label{fig:sqrtoram}
    \end{subfigure}
    \begin{subfigure}{\columnwidth}
        \centering
        \includegraphics[width=\linewidth]{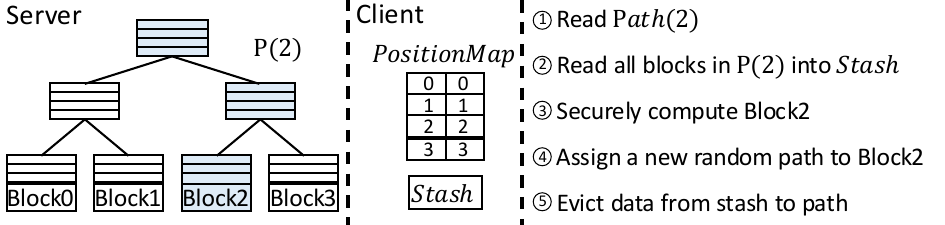}
        \caption{Tree-based ORAM ($N=4$, $Z=4$).}
        \label{fig:treeoram}
    \end{subfigure}
    \caption{SqrtORAM and TreeORAM client-server model.}
    \label{fig:oram}
\end{figure}
\bheading{GPU TEE and Confidential LLM Inference.} NVIDIA’s Hopper (H100) and Blackwell (B200, RTX Pro 6000) GPUs extend TEE protection to the GPU, providing device-rooted attestation~\cite{nvidia:2025:attestation} and per-VM isolated memory~\cite{nvidia:2023:h100,nvidia_confidential_computing}. Together, CPU and GPU TEEs form an end-to-end confidential stack over PCIe, and major providers including Meta, Microsoft, and Google are assembling production confidential LLM inference pipelines on top of this foundation~\cite{whatsapp:whitepaper, whatsapp_private_processing,azure:confidential:ai, mithril:blindai,mithril:blindllama,edgeless:continuum,google:2025:gputee}. TEEs are the most practical candidate for confidential LLM inference, as alternatives such as homomorphic encryption or MPC are too slow, and differential privacy cannot provide complete confidentiality against a malicious hypervisor.
    
\subsection{LLM Tokenization}
\label{sec:bg:tokenziation}
Tokenization converts human-readable text into discrete token identifiers (tokenIDs) in the LLM inference pipeline. Widely deployed decoder-only LLMs use byte-pair encoding~\cite{bpe} over altenatives such as WordPiece~\cite{wordpiece}, and SentencePiece~\cite{sentencepiece}, as it handles rare and out-of-vocabulary words through iterative merging of frequent byte pairs. Our work focuses on BPE tokenizers using HuggingFace’s Tokenizers library~\cite{hftokenizer} with further details in Section~\ref{sec:motivation:case}.

\subsection{Oblivious Random Access Memory}
Oblivious random access memory (ORAM)~\cite{oram-orig} is a cryptographic mechanism that conceals memory access patterns from an adversary observing the memory interface. ORAM guarantees that any two logical access sequences of the same length result in computationally indistinguishable physical access traces. Most ORAM constructions assume a client-server setting (Figure~\ref{fig:oram}), in which a small trusted client maintains the ORAM metadata, while an untrusted server stores the encrypted data blocks. In the following sections, we explain two fundamental ORAM structures, square-root ORAM (SqrtORAM) and tree-based ORAM (PathORAM), using the following notations. $N$ denotes the number of data blocks in ORAM, and $\stash$ is the stash of the ORAM. For PathORAM, $L$ is the depth of the ORAM tree, and $Z$ is the number of blocks per bucket, which includes both the real blocks and dummy blocks. Lastly, $\opath{l}$ denotes the path $l$ .

\bheading{Square-root ORAM.} SqrtORAM stores $N$ data blocks and $\sqrt{N}$ additional dummy blocks in an encrypted array on the untrusted server. The client holds the pseudo-random function, ($\prf$) $\pi$ with key $K_s$ and a $\stash$ of size $\sqrt{N}$. To access a block $Block_a$, (1) the client performs a linear scan of $\stash$; (2) if not found, the client performs $\prf(K_s, Block_{a})$ locally and generates a \textit{real} access request to the main array; if found, the client performs $\prf(K_s, Block_{dummy})$ locally and generates a \textit{dummy} access (indistinguishable from the server's view) to the main array; (3) securely computes/updates the requested data block; (4) (re)inserts it into the $\stash$; (5) at every $\sqrt{N}$ logical accesses, an oblivious rebuild/reshuffle phase is triggered, shuffling all $N+\sqrt{N}$ blocks into a freshly permuted array using an oblivious sorting or permutation network. Each access in SqrtORAM requires a single block transfer to the server, minimizing latency and bandwidth overheads, but the rebuilding phase incurs $O(N\log^2N)$ work.

\bheading{Tree-based ORAM.} PathORAM organizes $N$ data blocks into a binary tree of height $L=\log(N)$, where each tree node is a bucket that can hold up to $Z$ ORAM blocks on its untrusted server. Each node is always padded with dummy blocks to maintain $Z$ ORAM blocks. The client manages a position map that records the leaf (i.e., tree path) currently assigned to each logical data block, and a small buffer called the $\stash$. PathORAM access proceeds as follows: (1) look up the leaf $\opath{l}$ derived from the position map; (2) read the entire path $\opath{l}$ into the client and move the retrieved blocks into the $\stash$; (3) securely compute/update the requested data block; )4) assign the block a fresh random leaf and update the position map; (5) evict blocks from the $\stash$ along $\opath{l}$ within the bucket capacity limit. Because each access transfers a full root-to-leaf path, PathORAM's bandwidth and per-access eviction logic, which require write-back into the $\opath{l}$, can dominate when accesses are small and frequent (as in tokenizer-table lookups).

\section{Leakage from Tokenizer Execution Path}
In this section, we focus on how the tokenizer’s execution path becomes a concrete source of information leakage even when LLM inference is protected by TEEs. We begin by formalizing the threat model, then we present a motivating example that illustrates how input-dependent behaviors in the tokenizer expose structural access patterns. Finally, we evaluate naive defenses and show that the overhead is incompatible with latency-sensitive LLM serving.

\begin{figure}[t]
    \centering
    \includegraphics[width=0.6\columnwidth]{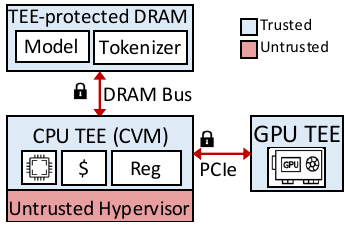}
    \caption{Threat model of a Confidential LLM Service Deployment.}
    \label{fig:threat}
\end{figure}

\subsection{Threat Model}
\label{subsec:threat}
Figure~\ref{fig:threat} illustrates the system topology for confidential LLM serving. The CPU TEE hosts the CVM protecting the guest OS, NVIDIA driver, software stacks including PyTorch, and tokenizer; the GPU TEE securely stores model weights and activations in on-package HBM. HBM is treated as part of the trusted computing base because of its tight physical integration within the GPU package~\cite{nvidiacc}, eliminating any observability into access patterns within the HBM and GPU L2 cache. This is consistent with existing TEE deployments which assume the integrity of the CPU and GPU SoC packages. Additionally, all traffic between CPU TEE, DRAM, and GPU TEE over PCIe is encrypted and authenticated.

Under this architecture, we adopt the standard threat model for CPU and GPU TEEs~\cite{amdcc,google:2025:gputee,nvidia_confidential_computing}, assuming an adversary controlling the full software stack with physical access to the cloud server. With these capabilities, the adversary can mount software-based attacks (page fault-controlled channels and prime-and-probe cache attacks) to observe DRAM access patterns at cache line granularity~\cite{li2021cipherleaks,li2022systematic,hornetztdxray} and can snoop the DRAM bus directly to obtain address-level access patterns~\cite{seto:2025:wiretap,de:2025:badram}.
The adversary can perform a calibration phase of chosen queries to the tokenizer, correlating traces across many inputs to localize the memory regions and layouts of tokenizer tables, even under per-process hash randomization. This calibration is demonstrated as the one-time token-localization stage of TDXRay~\cite{hornetztdxray} where it recovers the table layout offline using crafted prompts. Therefore, we conservatively assume the tokenizer table layout is known to the adversary, and layout obscurity provides no meaningful protection against access-pattern leakage.

\begin{figure}[t]
    \centering
    \includegraphics[width=0.6\columnwidth]{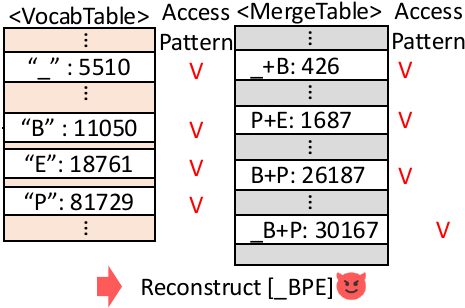}
    \caption{Access-pattern leakage during BPE Tokenization.}
    \label{fig:leakage}
\end{figure}

We distinguish two leakage channels. \textit{Fine-grained access-pattern leakage:} a malicious hypervisor observing DRAM traces can reveal which specific vocabulary and merge-table entries were accessed, enabling exact prompt reconstruction (Section~\ref{sec:motivation:case}). \textit{Coarse length-based leakage:} BPE table-lookup count, observable through ORAM access count, scales linearly with prompt byte length which is unavoidable in any practical ORAM deployment; hiding it would require padding every request to the maximum context length, incurring prohibitive overhead. \otro{} targets the first channel as the more severe threat of exact reconstruction. The remaining input prompt length can be addressed independently through request length bucketing.

\subsection{Case Study: Leakage in BPE Tokenizer}
\label{sec:motivation:case}

Compared to the largely branch-free, matrix-multiplication-dominated computation inside transformer blocks, the tokenizer is a branch-heavy, table-driven component that operates directly on the raw user prompt. Each input character or byte triggers conditional logic, hash lookups, and iterative merge decisions. This combination of \emph{rich control flow} and \emph{prompt-dependent memory accesses} makes the tokenizer an especially attractive target for side-channel attacks: leakage at this stage directly reveals the user’s original text, rather than intermediate embeddings or internal activations. We next explain the tokenizer process with a simple example and identify the source of leakage in it. 

Figure~\ref{fig:tokenizer_bpe} illustrates the workflow of the BPE tokenizer, consisting of the encoder step (Figure~\ref{fig:tokenizer}) and the BPE merge algorithm (Figure~\ref{fig:bpe}). The goal of the tokenizer in this example is to tokenize the input string "tiktoken is a fast BPE tokeniser." The BPE tokenizer in Figure~\ref{fig:tokenizer} takes the following steps. First, the pre-tokenization step takes our example input string and breaks it down into tokenizer-manageable pieces; in our example in Figure~\ref{fig:tokenizer}, it is broken into per-word segments. The per-word segment is shown in the central tile of Figure~\ref{fig:tokenizer}. Each word is preceded by an "\_" to indicate the start of the word segment. 

The next step is the encoder step (step 3). This step looks up the vocab table, which holds the learned vocabulary and its corresponding tokenIDs. For example, common words such as "fast" hit in the vocab table. But uncommon words such as "tiktoken" are not present in the vocab table. We illustrate which words hit and miss in the vocabulary table in the center tile of the figure under each word. 

\begin{figure}[t]
    \centering
    \begin{subfigure}{\columnwidth}
        \centering
        \includegraphics[width=\linewidth]{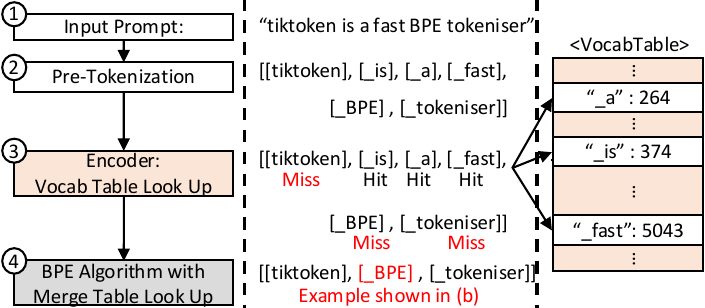}
        \caption{Example input string of "tiktoken is a fast BPE tokeniser."}
        \label{fig:tokenizer}
    \end{subfigure}
    \begin{subfigure}{\columnwidth}
        \centering
        \includegraphics[width=\linewidth]{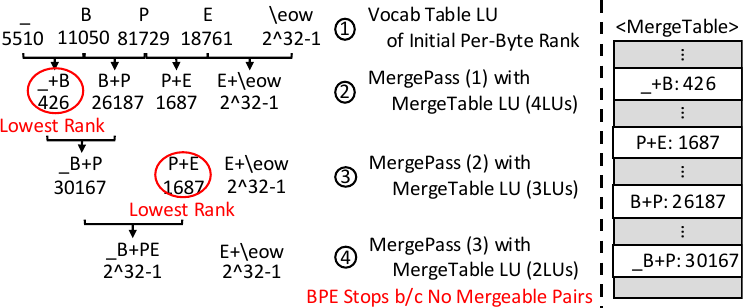}
        \caption{Example of BPE algorithm merge passes on "\_BPE".}
        \label{fig:bpe}
    \end{subfigure}
    \caption{BPE tokenizer architecture.}
    \label{fig:tokenizer_bpe}
\end{figure}

Upon failure to find a valid token in the vocab table for words, such as "tiktoken", BPE tokenizer builds a list of tokenIDs that together approximate the target word. This step is where the BPE algorithm has to access the vocabulary table repeatedly to merge short tokens into longer tokens using multiple merge passes. We illustrate that step next.

Figure~\ref{fig:bpe} shows the algorithm execution path for tokenizing the word "\_BPE". The word is first split into individual characters where each character is treated as a token. The tokenizer algorithm finds a pair of adjacent tokens (characters) with the lowest rank, as illustrated by the numerical values in the merge table (greedy path selection), at each merge pass, and merges these two tokens into a single token. Importantly, each merge modifies the current token sequence, which in turn changes the set of adjacent candidate pairs considered in subsequent passes. As a result, the merge-table lookups are inherently dynamic; after each merge, newly formed adjacent pairs must be queried against the merge table, while previously valid pairs may become invalid. Consequently, the set of merge-table queries at pass $k+1$ cannot be determined before completing pass $k$, making the merge procedure inherently sequential. The word "\_BPE" is decomposed into the base symbols of "\_", "B", "P", and "E". Then, the algorithm consecutively chooses the adjacent pairs "\_+B" and "P+E", as they have the lowest ranks in the merge table (426 and 1687, respectively). BPE merges stop when there are no mergeable pairs, which is denoted by $2^{32}-1$. 

Consider the series of access requests shown in Figure~\ref{fig:leakage} that is derived from the examples in Figure~\ref{fig:bpe}, for the input word "\_BPE." Given that the model's vocabulary and merge tables are fixed, each observed lookup address can be mapped back to a concrete entry (a vocab token or a merge-pair with a specific rank) using an offline address-to-entry map. The vocabulary accesses reveal that the initial base symbols were materialized during the execution (\eg, "\_", "B", "P", and "E"). Also, the merge-table accesses reveal the candidate adjacent pairs examined at each step (\eg, "\_+B", "B+P", and "P+E"); the selected merge can be inferred from the subsequent update-dependent accesses (\eg, "\_B+P" is the next pass's changed set of candidate pairs that were probed). 

With these observations, an adversary can reconstruct the word by replaying the deterministic BPE procedure. In the first merge pass, the tokenizer probes the merge-table entries, namely "\_+B", "B+P", and "P+E", which let an observer recover the ranks of these candidates. Although the trace does not explicitly label which candidate was chosen, the chosen merge can be inferred from the subsequent access behavior: the tokenizer materializes the merged token, "\_B" via a vocabulary lookup and updates the local neighborhood of the merged position, which changes the set of adjacent pairs that will be queried in the next pass. Observing that the next pass probes pairs consistent with "\_B" (rather than "BP" or "PE") identifies "\_+B" as selected merge. Repeating this reasoning across such replays produces the same intermediate token sequence and thus the same final tokenization outcome. Because BPE tokenization is deterministic given the vocabulary and merge rules, the replay procedure described above uniquely determines the resulting token sequence. Thus, address-level vocabulary and merge table access traces are sufficient to reconstruct tokenizer outputs without observing plaintext memory contents.

\section{Prompt Reconstruction on Intel TDX}
The leakage of user prompt through tokenizer access pattern in Section \ref{sec:motivation:case} is realized end-to-end on Intel TDX by TDXRay~\cite{hornetztdxray}. TDXRay performs prompt-reconstruction attack against unmodified confidential VM performing LLM inference, recovering user prompts from the tokenizer's memory accesses while all guest memory remains encrypted. 

The attack proceeds in two steps. In the one-time setup, the adversary builds a host-side tracer from the TDX interface, which combines the page-management APIs (\eg \texttt{TDX.MEM.RANGE.BLOCK/UNBLOCK} with a cache side-channel, pinning each access to the 64-byte cacheline it touches. Using custom prompts, it then localizes the tokenizer's table entries in the guest's physical memory, exploiting the fact that the tokenizer's table layout is fixed after initialization. Because this layout is static until the victim process restarts, the cost is amortized over the lifetime of the inference. 

Once the table is localized, each victim prompt leaks its entire plaintext. The adversary records the cache-line granularity access trace of the tokenizer tables and replays the deterministic lookup procedure. This is the same reconstruction argued in Section~\ref{sec:motivation:case} to recover the prompt. TDXRay reports over 90\% reconstruction similarity across model families. This result proves that the tokenizer is a demonstrated, highly-vulnerable prompt-reconstruction channel under a realistic host adversary that even survives the TEE protections. Thus, defending against it requires making the tokenizer's table-access pattern independent of the input, which we develop in Section \ref{sec:idea}.
\section{\otro: Oblivious Tokenization Path with Square-Root ORAM}

\begin{figure}[!t]
    \centering
    \includegraphics[width=\columnwidth]{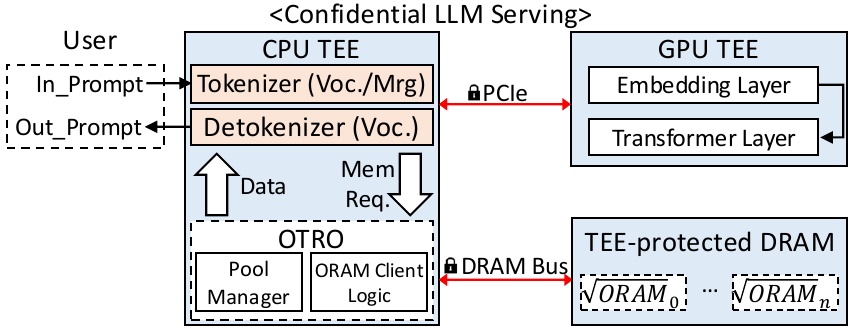}
    \caption{Overview of OTRO system.}
    \label{fig:overview}
\end{figure}

\label{sec:idea}
In this section, we introduce \otro{}, an \emph{efficient and oblivious tokenization pathway} built on top of SqrtORAM and tailored to BPE tokenizers in LLM inference serving.

\subsection{\otro{} Overview}
\label{sec:idea:overview}
Directly embedding SqrtORAM into the LLM tokenization pipeline presents two challenges: (i) each instance serves only $\sqrt{N}$ accesses before requiring a rebuild, and (ii) the rebuild costs $O(N \log^{2} N)$, which is prohibitive on the critical path. \otro{} overcomes these obstacles by exploiting the read-only nature of tokenizer tables and by decoupling accesses from costly rebuilds via three ideas.

\bheading{Pooled SqrtORAM Replication:} \otro{} replicates the static vocabulary and merge tables across a \emph{pool} of read-only SqrtORAM instances, effectively increasing the $\sqrt{N}$-access limit. \textit{Epoch-Based Rotation and Background Rebuild}: an epoch-based rotation policy routes accesses to instances with remaining $\sqrt{N}$-access budget, while depleted instances rebuild asynchronously in the background, so that rebuilds never block foreground tokenization. Dummy accesses pad each epoch to its $\sqrt{N}$ boundary, reducing the observable transcript to a coarse epoch count. \textit{Chunked Tokenization with KV-Shared Prefill}: long prompts are split into chunks whose tokenization can be interleaved with GPU prefill using a shared KV cache, reducing the active SqrtORAM instance count and providing slack to hide rebuild latency.

Figure~\ref{fig:overview} illustrates the overall \otro{} system deployed under a CVM. The tokenizer executes in the CPU TEE, while the embedding and Transformer layers (including both prefill and decoding) execute in the GPU TEE. Each client session receives an independent SqrtORAM pool keyed by identifier $cid$, ensuring full isolation among clients; no client's accesses influence another's epoch counters or rebuild schedule. The pool manager selects instances via the epoch-based rotation policy and triggers asynchronous rebuilds for exhausted instances, keeping rebuild work off the critical path. Since per-client pools are independent, multi-client throughput scales with GPU scheduling, so the single-client evaluation in Section~\ref{sec:eval} captures the full \otro{}-specific performance.

\begin{algorithm}[t]
\caption{Per-client Epoch-based Scheduler}
\label{alg:read_algo}
\footnotesize
\begin{algorithmic}[1]
\STATE \textbf{Input:} $cid$: client identifier, $request$
\STATE \textbf{Global:} $Pool_{cid}[1..n_{\text{pool}}]$, $ctr_{cid}[1..n_{\text{pool}}]$, $state_{cid}[1..n_{\text{pool}}]$, $active_{cid}$, $EPOCH\_LIMIT \leftarrow \sqrt{N}$
\STATE \textbf{function} \textsc{GlobalAccess}($cid, request$)
    \STATE $\>\>\>\>\> active \leftarrow active_{cid}$ if defined, else smallest $j$ s.t. $ctr_{cid}[j] = 0$
    \STATE $\>\>\>\>\>active_{cid} \leftarrow active$
    \STATE $\>\>\>\>\>inst \leftarrow Pool_{cid}[active]$
    \STATE $\>\>\>\>\>response\_msg \leftarrow \textsc{Read}(inst, request)$
    \STATE $\>\>\>\>\>ctr_{cid}[active] \leftarrow ctr_{cid}[active] + 1$
    \begin{ALC@g}
    \IF{$ctr_{cid}[active] == EPOCH\_LIMIT$}
        \STATE $state_{cid}[active] \leftarrow \textsc{REBUILDING}$
        \STATE \textsc{LaunchRebuildAsync}$(cid, active)$ \\ $\>\>\>\>\>$ \COMMENT{async: oblivious shuffle, reset ctr, set READY}
        \STATE $active_{cid} \leftarrow (active \bmod n_{\text{pool}}) + 1$
    \ENDIF
    \end{ALC@g}
    \STATE \textbf{return} $response\_msg$
\end{algorithmic}
\end{algorithm}

\subsection{SqrtORAM Pool Access}
\label{sec:idea:access}
The key observation in the LLM tokenizer during inference is that tokenization and detokenization issues only read operations to the vocab/merge tables. The read-only access pattern indicates the rebuild never needs to reconcile updated blocks allowing \otro{} to \textit{decouple} the strict rebuild-after-$\sqrt{N}$-access requirement of SqrtORAM. \otro{} maintains a pool of independently permuted SqrtORAM instances that hold the same vocab/merge data, so while one instance rebuilds asynchronously, other continues serving requests uninterrupted.

\bheading{Initialization.} The LLM tokenizer's vocab and merge tables are publicly available, but we seek to obfuscate subsequent access patterns by constructing multiple independently permuted SqrtORAM instances. When a new client session begins, \otro{} assigns the client a unique $cid$ and uploads the plaintext tokenizer data into TEE-protected DRAM. \otro{} concatenates the vocab and merge table into a single flat array, then generates $P$ independent permutations using per-instance keys ($k_i, 0\leq i < P$), instantiating $P$ SqrtORAM structures each with a distinct random layout. Because the tokenizer data is public, the obliviousness of future accesses relies solely on access-pattern indistinguishability provided by each SqrtORAM instance rather than on the layout obscurity. \otro{} retains all ORAM metadata, including $\prf$, random key ($k_i$), $\stash$, and access counters for each ORAM instance in its TEE-protected memory. Furthermore, with per-client pools, access patterns from distinct clients are fully isolated.

\bheading{Access Request Serving.}  
During LLM inference, the tokenizer's memory request is submitted to the \otro{} pool manager, which selects a SqrtORAM instance from the client's dedicated pool. We deploy an epoch-based instance rotation policy, Algorithm~\ref{alg:read_algo}, where each SqrtORAM instance is fully utilized, serving $\sqrt{N}$ accesses before being scheduled for rebuild. The manager maintains a pool of SqrtORAM instances and a pointer to the currently active instance. Each instance serves up to $\sqrt{N}$ accesses in an epoch, during which all incoming requests are mapped to that instance, independent of the address being accessed. Once an instance reaches its per-epoch access limit, the manager launches an asynchronous rebuild (\textsc{LaunchRebuildAsync}) for that instance in a detached thread and advances the active pointer to the next instance in the pool. The choice of the instance is a deterministic function of the global access count; an adversary can see when \otro{} switches instances, but the switching of instances is a function of "global access count", which is already leaked in any practical ORAM designs. Thus, the decision is independent of the logical sequence and does not encode any information about the data being accessed.

\bheading{Access Count Padding.} The exact ORAM access count within an epoch is observable to the hypervisor and leaks fine-grained length information beyond what byte length already reveals. Since each epoch naturally spans exactly $\sqrt{N}$ accesses, \otro{} pads dummy accesses to the next $\sqrt{N}$ boundary at epoch completion, causing at most $\sqrt{N}-1$ additional dummy accesses. This replaces the exact access count with the coarser epoch number, dramatically reducing the residual access-count leakage, which is analyzed in Sec~\ref{sec:eval:leak}.

\newcommand{\reshuffle}{\mathsf{reshuffle}}
\newcommand{\access}{\mathsf{access}}
\newcommand{\req}{\mathsf{req}}
\newcommand{\att}{\mathsf{att}}
\newcommand{\toram}{\mathsf{oram}}
\newcommand{\tmax}{\mathsf{max}}
\newcommand{\pool}{\mathsf{pool}}
\newcommand{\maxctx}{\mathsf{max\_ctx}}
\newcommand{\prefill}{\mathsf{prefill}}
\newcommand{\chunk}{\mathsf{chunk}}

\subsection{Chunked Tokenization } 
\label{sec:idea:chunk}

\begin{figure}[!t]
    \centering
    \includegraphics[width=\linewidth]{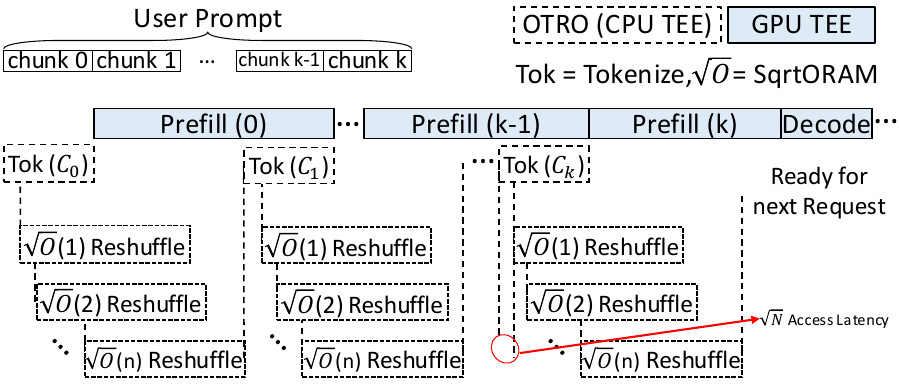}
    \caption{Chunked tokenization with asynchronous reshuffle.}
    \label{fig:chunk_pipeline}
\end{figure}

As shown in Figure~\ref{fig:chunk_pipeline}, we introduce chunked tokenization to effectively hide rebuild latency in variable-length prompt input scenarios while minimizing the number of SqrtORAM instances. Supporting tokenization for variable-length input prompts would ideally require a pool sized to the model's maximum context length, which can expand to the order of hundreds of instances for long-context models like Llama-3.1 (128K tokens). Chunked tokenization effectively reduces the number of replicas by interleaving tokenization and GPU prefill: the pool need only be large enough to cover one chunk at a time rather than the full prompt. 

The number of instances can be determined either by the model's maximum context length or by the ratio between rebuild and $\sqrt{N}$-access latency. Each LLM has a maximum context length of $L_{\maxctx}$ tokens. We can set the pool size, $n_{\pool} = \min( \lceil \frac{L_{\maxctx} * \lambda_{\att}}{A_{\max}^{\toram}} \rceil, \lceil \frac{T_{\reshuffle}}{T_{\access}*A_{\max}^{\toram}} \rceil)$ where $\lambda_{\att}$ is the ORAM access rate per token, $T_{reshuffle}$ is the per-instance rebuild time, and $T_{access}$ is the per-lookup latency excluding rebuild.

\otro{} partitions the input prompt into multiple smaller chunks that are tokenized and fed into the LLM pipeline. The chunking is strictly an internal scheduling mechanism occurring at pre-tokenization boundaries (\ie{,} whitespace-delimited words) and does not change the logical behavior of the tokenizer or the model; BPE merges never cross the chunk boundaries, producing a token sequence bit-for-bit identical to tokenizing the full input at once. Additionally, for correctness, special tokens, such as beginning-of-sequence and end-of-sequence, are not inserted at the chunk boundaries. With incremental prefill over a shared KV cache, the model sees a single contiguous token sequence of tokenIDs identical to the non-ORAM baseline; \otro{}'s chunking affects only performance and memory behavior, not semantics.

Chunk size $\chunk_{\tok}$ is chosen with explicit awareness of GPU utilization and ORAM rebuild time so that the prefill duration of the chunk overlaps with the rebuild phase of the next chunk: $T_{\gpu,\prefill}(\chunk_{\tok}) > T_{\reshuffle}$. This ensures that once the tokenization of the chunk ends and the rebuild phase is launched, the entire rebuild phase completes while the GPU is executing the prefill stage. We select the chunk size so that each chunk's GPU prefill interval serves as a "coverage window" that amortizes the SqrtORAM rebuild cost, allowing the system to maintain high GPU utilization while keeping SqrtORAM-induced stalls off the critical path of the LLM pipeline.

\subsection{Detokenization}
\label{subsec:detok}

The de-tokenization process converts the tokenIDs generated by LLM decoding into human-readable text using a reversed vocab table. Obfuscating access patterns to this structure, therefore, plays an important part in guarding output prompts from leakage. We apply our ORAM-based obfuscation, \otro{} to detokenizer accesses. The main difference lies in ORAM block organization; each ORAM block needs to be padded to the maximum vocabulary length in the reversed vocab table. Without padding, the word length could leak from the returned message length of ORAM access. Thus, compared to \otro{} in the tokenization path, we will have a fewer blocks ($N$), but a bigger data size per block.

Similarly to overlapping tokenization and LLM prefill, we overlap detokenization and the LLM decode stage. In contrast to the prefill stage, which requires the entire tokenized result of the input prompt, the decode stage emits output tokens every few tens of milliseconds. Thus, the number of instances required by \otro{} can be dramatically reduced, as the $\sqrt{N}$ accesses occur over hundreds of milliseconds to seconds, in which the other instance can finish its rebuild.
\section{Security Analysis}
\label{sec:security}
\newcommand{\advantage}[2]{\textsf{Adv}^{#1}_{#2}}
\newcommand{\negl}{\textsf{negl}}
\newcommand{\exe}[1]{\textsf{Exec}(#1)}
\newcommand{\addr}[2]{a^{#1}_{#2}}
\newcommand{\trac}[1]{\mathcal{T}(#1)}
\newcommand{\traco}[1]{\mathsf{T}(#1)}

\bheading{Threat Model.} We consider a probabilistic polynomial-time adversary $\adv$ that controls the untrusted software stack and can observe all physical DRAM accesses generated by the CVM. Due to TEE memory protection and encryption, the adversary cannot observe plaintext DRAM contents, cryptographic keys, or internal CVM state, and it cannot tamper with execution inside the CVM. 

\bheading{Security Goal.} The security goal of \otro{} is to protect the table-dependent memory behavior of tokenization and detokenization. Concretely, even if an adversary can observe the CVM’s physical DRAM trace, they should learn nothing about which logical entries in the vocabulary table or merge table are accessed. In other words, \otro{} provides an access-pattern confidentiality guarantee for tokenizer-table lookups, which is the primary channel that links DRAM addresses to specific tokens or merges.
This guarantee is intentionally scoped to the ORAM-protected table interface, rather than the entire tokenizer implementation. Consistent with prior ORAM-based systems~\cite{pathoram, ringoram, rajat2022pageoram}, we do not attempt to eliminate side channels that stem from computation-level effects (e.g., timing variability or microarchitectural vulnerability) that may arise from specific code patterns or implementation details; analyzing and hardening such channels requires a different methodology and is orthogonal to \otro{}’s design. Likewise, \otro{} does not protect against input-structural leakage that is either unavoidable or application-dependent, including the processing time, the length of the input sequence, and the total number of tokenizer table lookups by default; with padding enabled, only the coarser epoch count is revealed.

\bheading{Observable Transcript.} The observable transcript consists of the sequence of physical DRAM accesses arising from i) tokenizer and detokenizer table lookups and ii) background rebuild operations. Public configuration parameters, including vocabulary size, table layout, pool size, chunk size, and the maximum context length of the model, are explicitly revealed. 
Without access count padding, the observable transcript includes the total number of accesses and the number of processing epochs. With access-count padding enabled, only the epoch count $\lceil \text{access\_count} / \sqrt{N} \rceil$ is revealed, since dummy accesses fill each epoch to the $\sqrt{N}$ boundary before rotation. For example, \otro{} may reveal that a given input triggered 3014 table accesses processed across two chunks, but it should not reveal which logical entries of the tokenizer tables were accessed or the frequency of such accesses.


\bheading{Epoch-based Scheduling and Instance Rotation.} 

Let $j(i) = \left\lfloor \frac{i-1}{\sqrt{N}} \right\rfloor \bmod n_{\text{pool}} + 1$ denote the SqrtORAM instance selected for the $i$-th logical access. By Algorithm~\ref{alg:read_algo}, $j(i)$ is a deterministic function of $i$ and public parameters $\sqrt{N}$ and $n_{pool}$; it does not depend on the logical address being accessed. 
Since rebuild operations are triggered solely when the per-instance access counter reaches $\sqrt(N)$, their timing is likewise determined by the aggregate access count. Thus, instance rotation and rebuild scheduling introduce no additional input-dependent leakage beyond what is already captured in the observable transcript. 

\bheading{Security Proof Sketch.} 
\begin{theorem}[Execution Trace Indistinguishability] 
\label{the:1}
For any two executions inducing the same number of tokenizer accesses, no probabilistic polynomial-time (PPT) adversary can distinguish their physical DRAM traces with non-negligible advantage.
\end{theorem}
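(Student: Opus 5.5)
The plan is a standard hybrid argument that reduces the theorem to the obliviousness of a single SqrtORAM instance and the pseudorandomness of $\prf$. Fix two executions $E_0, E_1$ that induce the same number $M$ of tokenizer table accesses. With padding, both are extended to $\lceil M/\sqrt{N}\rceil\sqrt{N}$ accesses, so the padded lengths also agree.

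\textbf{Step 1: split the trace.} By Algorithm~\ref{alg:read_algo} and the instance map $j(i)$, the physical trace of each execution decomposes into a fixed, public schedule. That schedule consists of:
\begin{itemize}
\item epochs $e = 1,\dots,\lceil M/\sqrt{N}\rceil$, where epoch $e$ is served entirely by instance $j = ((e-1) \bmod n_{\text{pool}}) + 1$;
\item each epoch followed by an oblivious rebuild of that instance.
\end{itemize}
Since $j(i)$ and the rebuild trigger times are functions only of $i$ and the public parameters, the skeleton (which instance, which epoch, when a rebuild occurs) is identical in $E_0$ and $E_1$. What remains is to show that the contents of each epoch segment and each rebuild segment are indistinguishable.

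\textbf{Step 2: rebuilds.} A rebuild is an oblivious sort or permutation network over $N+\sqrt{N}$ blocks. Its physical access sequence is data-independent and identical in both executions. Its only output is a new layout determined by a fresh key $k_i$, which is sampled independently of all prior inputs.

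\textbf{Step 3: epochs via hybrids.} Inside one epoch, an instance with key $k$ serves exactly $\sqrt{N}$ reads, real or dummy. Each read does the following:
\begin{itemize}
\item scans the entire stash (a fixed pattern);
\item touches exactly one main-array slot $\prf(k,\cdot)$, applied either to the requested block or to a fresh dummy index.
\end{itemize}
Within an epoch no input to $\prf(k,\cdot)$ is ever repeated, because a found block is redirected to an unused dummy and the stash holds at most $\sqrt{N}$ entries. Hence, replacing $\prf(k,\cdot)$ by a truly random permutation makes the touched slots a uniformly random sequence of distinct positions. This distribution is independent of the logical addresses.

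I then define hybrids $H_0,\dots,H_{2m}$, where $m$ is the number of epochs:
\begin{itemize}
\item first, replace the $\prf$ of each epoch's key, one at a time, by a random permutation, costing $\advantage{\prf}{\adv}$ per step;
\item then switch the logical request sequence of that epoch from $E_0$'s to $E_1$'s, which leaves the distribution identical.
\end{itemize}
Keys are independent per instance and refreshed at every rebuild, so the epochs do not interact. The total advantage is at most $2m\cdot\advantage{\prf}{\adv} = \negl$, since $m$ is polynomial.

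\textbf{Main obstacle.} The delicate point is the distinct-$\prf$-input invariant within an epoch. This invariant must hold for real lookups, stash hits, and padding dummies alike, and it rests on the $\sqrt{N}$ dummy budget matching EPOCH\_LIMIT exactly. I would state it as a lemma and check the boundary case in which the last padded access consumes the final dummy.

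A second point needs care: concurrency with background rebuilds interleaves their accesses with foreground ones. I would argue that the interleaving is fixed by the public schedule, or otherwise fold any timing variation into the out-of-scope timing channel, as the threat model already does.

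Finally, the stash scan, counters, and metadata all reside in the trusted client state. I would model them as constant-pattern accesses (a full linear scan), so they contribute nothing input-dependent to the trace.
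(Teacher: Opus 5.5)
Your proof is correct, but it is organized differently from the paper's. The paper works access by access. Lemma~1 shows that the $i$-th access has the same trace distribution in both executions, because $j(i)$ is public, the PRF output is uniform over the $N+\sqrt{N}$ slots, and the stash scan and rebuild are fixed-pattern. Lemma~2 asserts that traces of distinct accesses are mutually independent, and the theorem then follows by factoring the joint trace into a product of identical marginals. You work epoch by epoch instead: you fix the public skeleton, swap each epoch's PRF for a random permutation, and use the no-repeated-PRF-input invariant to conclude that the touched slots form a uniformly random injective sequence, whatever the logical addresses are. Your route buys three things. First, it gives an explicit advantage bound. The factor $2m$ requires the $m$ reverse hybrids that restore the PRFs on the $E_1$ side, and your chain $H_0,\dots,H_{2m}$ as written ends at the random-permutation version of $E_1$, so you should add them. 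Second, it pins down the invariant (the dummy budget matching the epoch limit $\sqrt{N}$) that the paper delegates to ``the security guarantee of SqrtORAM''. Third, it avoids two shortcuts in the paper's factorization. One is treating computational indistinguishability as identical distribution. The other is the claim in Case~1 of Lemma~2 that same-instance slots look like independent uniform locations. Within an epoch the slots are always distinct, whereas $\sqrt{N}$ independent draws from $N+\sqrt{N}$ slots collide with constant probability (about $1-e^{-1/2}$), so those two distributions are in fact easy to distinguish. The theorem needs only that the joint distribution does not depend on the input, and that is exactly what you prove. The paper's route buys brevity and a simple per-access structure. Both arguments rest on the same background assumptions, which you state more explicitly than the paper does: a fresh key at each rebuild, constant-pattern stash and metadata accesses, and treating the interleaving of background rebuilds as part of the out-of-scope timing channel.
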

\begin{proof} [Proof Sketch] 
\renewcommand{\qedsymbol}{}
By Lemma~\ref{lemma:1} (Appendix~\ref{sec:appendix:security}), the per-access \\ DRAM trace distribution is identical across both executions: the instance is selected deterministically from $i$ and public parameters and the PRF maps the logical address to a location uniform over $N+\sqrt{N}$ positions. By Lemma~\ref{lemma:2}, traces from distinct accesses are mutually independent since PRF evaluations are keyed independently per instance, and reshuffles follow a fixed public schedule. 
By independence, the joint distribution factors into a product of identical per-access distributions, so the distinguishing advantage is negligible. Together, Lemmas~\ref{lemma:1} and ~\ref{lemma:2} establish that the complete DRAM trace distribution is identical across any two executions with the same access count, giving Theorem~\ref{the:1}. Full proofs are in Appendix~\ref{sec:appendix:security}.
\end{proof}

\section{Evaluation}
\label{sec:eval}

\begin{figure*}[t]
    \centering
    \includegraphics[width=2.1\columnwidth]{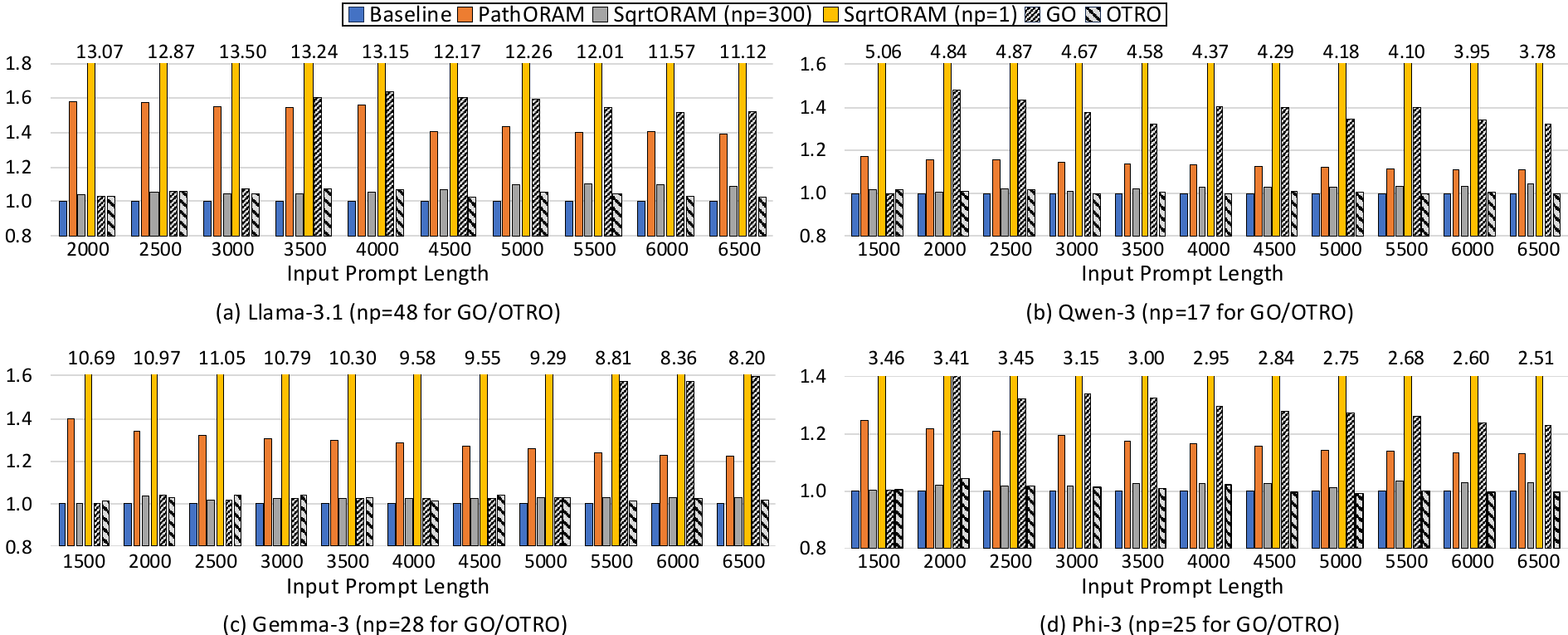}
    \caption{Normalized TTFT in Llama3.1, Qwen3, Gemma-3, and Phi-3.}
    \label{fig:ttft}
\end{figure*}

\begin{table}[t]
\centering
\caption{Tokenizer table configuration.}
\begin{tabular}{lcccc}
\toprule
\textbf{Tokenizer Config} & \textbf{LlaMA-3.1} & \textbf{Qwen3} & \textbf{Gemma-3} & \textbf{Phi-3} \\
\midrule
Vocabulary Size      & 128{,}000 & 151{,}643 & 262{,}144 & 32{,}000  \\
Merge Rules          & 280{,}147 & 151{,}387 & 514{,}906 & 61{,}249 \\
Max String Length    & 256       & 256       & 93        & 48       \\
Context Length       & 128{,}000 & 32{,}000  & 128{,}000 & 128{,}000 \\ 
\bottomrule
\end{tabular}
\label{tab:config}
\end{table}

\subsection{Experimental Setup and Implementation}
\label{sec:method}
\bheading{System Setup.} We evaluate on an Intel Xeon Gold 6548Y$+$ and an NVIDIA H100 GPU (Driver 550.163.01). Host runs Ubuntu 25.04, and the guest CVM (Intel TDX) runs Ubuntu 24.04 with 64 vCPUs and 256\,GB memory. Because we implement ORAM logic inside unmodified TDX, we inherit the deterministic AES-XTS behavior of TME; integrating Obelix-like probabilistic memory encryption would remove ciphertext determinism and is considered orthogonal to our work.

\bheading{Implementation.} We prototype \otro{} in HuggingFace Tokenizers (v0.22), redirecting each memory request to \otro{}. Chunked tokenization works as a scheduling module that interleaves tokenization and prefill above nano-vLLM's~\cite{nanovllm} existing KV-cache mechanisms. We disable the tokenizer's software merge cache in all configurations as its data-dependent hit/miss pattern leaks additional prompt information outside of the ORAM-protected path. We compare \otro{} with Baseline (no obfuscation), PathORAM~\cite{pathoram}, and naive SqrtORAM~\cite{oram-orig} (np=1), and evaluate two \otro{} variants: 1) GPU overlap (GO), which overlaps rebuilds with GPU computation but stalls if all instances are exhausted, and 2) \otro{} (GO+CT) which additionally applies chunked tokenization to eliminate the stalls. To probe the no-stall regime, we also provision a large upper-bound pool (np=300); we derive the minimum pool size using $n_{pool} \geq  \lceil T_{rebuild} / T_{epoch} \rceil$, which yields approximately 201, 59, 146, and 419 for Llama-3.1, Qwen3, Gemma-3, and Phi-3, respectively. For Phi-3, whose epoch consumption time is particularly short, a large pool ($\approx419$) would be required to strictly avoid stalls; however, due to its shorter rebuild latency, observed stall durations remain small in practice.

\bheading{Metrics, Models and Benchmarks.} 
We measure time-to-first-token (TTFT), tokenizer memory, and initialization cost for Llama-3.1-8B~\cite{llama3modelcard}, Qwen3-8B~\cite{qwen3technicalreport}, Gemma3-4B~\cite{gemma_2025}, and Phi-3-mini~\cite{abdin2024phi3technicalreporthighly} under PyTorch-2.6. These models cover a range of vocabulary sizes and merge-table configurations (Table~\ref{tab:config}); since ORAM overhead depends primarily on tokenizer vocabulary size rather than model depth, they are representative without requiring larger variants. Input prompt length is swept from 1500 to 6500 words, where each word yielding 3--5 tokens. Lastly, we note that \otro{}'s per-client design (Sec~\ref{sec:idea:overview}) ensures that concurrent multi-client behavior is architecturally independent across clients; the only shared resource is the GPU, which affects baseline and \otro{} equally. Therefore, the single-client prompt-length sweep presented here captures the full \otro{}-specific performance behavior, and multi-client throughput scales with GPU scheduling rather than ORAM design choices.

\bheading{Deployment Considerations.} \otro{} requires no changes to model weights, CUDA kernels, or transformer architecture. Chunked tokenization adds a thin scheduling layer above the serving framework's existing prefill and KV-cache mechanisms, with no retraining or model reconfiguration required, and thus, \otro{} can be modularly deployed, agnostic to the underlying serving framework.

\subsection{Offline Tuning for CPU-GPU Overlap}

Before running our main experiments, we perform an offline profiling phase to determine how to best overlap the rebuild phase of \otro{}'s SqrtORAM instances with the model's prefill phase. For each model, we sweep the chunk size for tokenization and the number of SqrtORAM instances in the pool and profile them together with the GPU prefill kernels. The goal is to select a configuration that maximizes overlap between CPU-side work (chunked tokenization and SqrtORAM rebuilds) and GPU prefill.

\bheading{Maintaining GPU utilization.} During the offline tuning, we monitor GPU utilization in the prefill phase to avoid configurations where aggressive chunking results in GPU SM starvation. Very fine-grained chunk sizes can reduce the amount of work (tokens) fed per prefill kernel launch and lower GPU occupancy, offsetting the benefits of overlap. We discard the configurations that noticeably degrade prefill utilization and choose the best-performing setting that both preserves high GPU utilization and provides strong overlap CPU-GPU overlap. Our profiling shows that for Llama-3.1, using a prompt chunk of approximately 3000 words ($\approx$9000 tokens after tokenization) provides enough CPU slack to complete the rebuild from the current epoch and tokenize the next chunk, and overlap with the GPU prefill. Offline profiling selects chunk sizes of $\sim3000$, $\sim1700$, $\sim5000$, and $\sim1500$ words for Llama-3.1, Qwen3, Gemma-3, and Phi-3, respectively, maintaining GPU utilization within 83--100\%.

\bheading{Detokenizer ORAM Config.} During the decode stage, our profiling shows an average GPU utilization of 51--65\% and memory utilization of 41-49\% across all models, with approximately 20--30 ms of compute per generated token. Given this headroom, we find that a pool of just two SqrtORAM instances is sufficient: their rebuild work can be entirely overlapped with the decode kernels, so the cost is effectively hidden in the decoder phase.

\subsection{Time-To-First-Token Latency} 
\label{sec:eval:ttft}
Figure~\ref{fig:ttft} presents the normalized TTFT over the baseline of no ORAM obfuscation in Llama-3.1, Qwen-3, Gemma-3 and Phi-3. Figure~\ref{fig:ratio} shows the percentage of tokenization latency in TTFT. 

\bheading{Impact of ORAM-based Tokenization on Llama-3.1.} We first evaluate the end-to-end effect of ORAM-based tokenization on TTFT for Llama-3.1. Across input prompt length, the normalized TTFT reflects the latency overhead of the tokenizer-only implementation depicted in Figure~\ref{fig:latency_wgpu}. PathORAM implementation consistently degrades TTFT by an average of 1.48$\times$, as tokenizer latency occupies nearly 35.73\% within the entire TTFT. The result reflects the high per-access cost of 5.16$\times$ in the tokenizer-only experiment. For naive SqrtORAM (np=1) implementation, TTFT increases by an average of 11.12$\times$, in which the tokenization time dominates TTFT with 92.42\%; the rebuild cost for naive SqrtORAM implementation occupies 99\% of the tokenization time.

\begin{table}[!t]
\centering
\footnotesize
\caption{Additional memory requirements (in MB) to support different ORAM schemes in tokenizer and detokenizer stages.}

\begin{tabular}{lccccc}
\toprule
\multirow{2}{*}{\textbf{Stage}} 
  & \multirow{2}{*}{\textbf{Model}} 
  & \multicolumn{3}{c}{\textbf{Memory Overhead (MB)}} \\
\cmidrule(lr){3-5}
  & & \textbf{PathORAM} & \textbf{SqrtORAM} & \textbf{\otro{}} \\
\midrule
\multirow{2}{*}{Tokenizer}
  & LLaMA-3.1 & 57.9    & 4.84  & 285.9 \\
  & Qwen3     & 37.5    & 4.5   & 46  \\
  & Gemma3    & 203.41  & 12.52 & 332.0 \\
  & Phi3      & 14.5    & 1     & 30.5  \\
  
\midrule
\multirow{2}{*}{Detokenizer}
  & LLaMA-3.1 & 571.4 & 3.73 & 7.68 \\
  & Qwen3     & 260.3 & 2.5 & 8.34 \\
  & Gemma3    & 274.8 & 11.0 & 22.09 \\
  & Phi3      & 17.9  & 3.0 & 4.0  \\ 
\hline
\end{tabular}
\label{tab:mem}
\end{table} 

SqrtORAM (np=300), which provisions the number of SqrtORAM instances to an upper-bound, depicts an ideal scenario where SqrtORAM does not pay any rebuild penalties during the tokenization. Compared to baseline, the results show a 6.90\% increase in TTFT; the average access latency increases by 1.26$\times$ in Figure~\ref{fig:latency_wgpu}. The tokenization time is only 10.86\% of the total TTFT. GO (np=48) performs near-optimal from 2000 to 3000 input prompt length, but spikes after 3500. The main reason for this spike is that once the access limit is reached for all instances in the pool, the next tokenization must wait until at least one instance is free. While the rebuilds are overlapped and the system does not pay the full price as in naive SqrtORAM (np=1), GO (np=48) pays an average of 57\% increase in TTFT. \otro{}, which is depicted as GO+CT (np=48) in the figures, performs similarly to the ideal case of SqrtORAM (np=300) in smaller input settings. However, once input length exceeds 4000, \otro{} reduces TTFT by 5\% compared to the ideal. As chunk tokenization encodes the first 3000 words and overlaps the rest with GPU computation, we do not pay the extra tokenization cost induced by SqrtORAM. Across our evaluation, \otro{} increases TTFT by only 4.59\% on average compared to the baseline. This small overhead leaves TTFT dominated by the model execution (91.41\%) rather than tokenization (5.58\%), so the cost of oblivious tokenization is modest in practice.

\begin{figure}[t]
    \centering
    \begin{subfigure}{\linewidth}
        \centering
        \includegraphics[width=\linewidth]{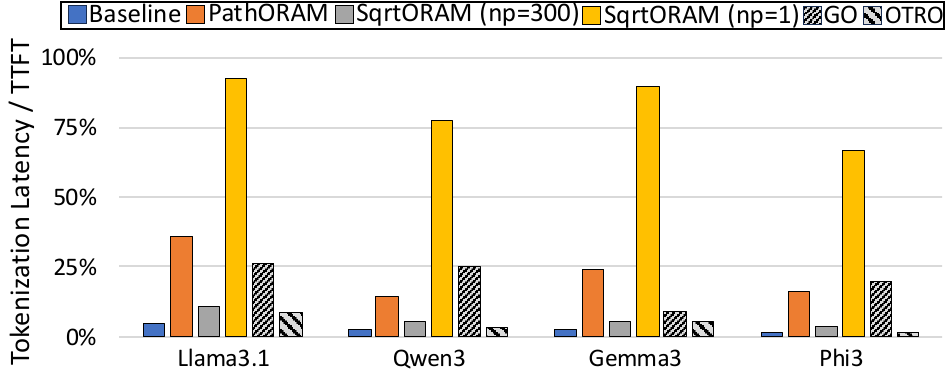}
        \caption{Tokenization-TTFT ratio.}
        \label{fig:ratio}
    \end{subfigure}
    \hfill
    \begin{subfigure}{\linewidth}
        \centering
        \includegraphics[width=\linewidth]{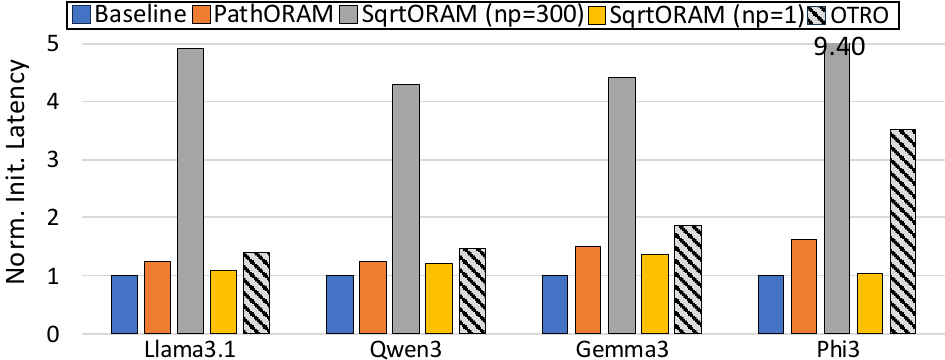}
        \caption{Initialization cost.}
        \label{fig:init}
    \end{subfigure}
    \caption{(a) Tokenization overhead as a fraction of TTFT and (b) Normalized tokenizer initialization cost.}
    \label{fig:ratio_init}
\end{figure}

\bheading{Cross-Model Trends.} Across Qwen3, Gemma-3 and Phi-3, the qualitative trends mirror Llama-3.1: naive SqrtORAM (np=1) incurs the worst overhead, GO spikes at long prompts once all instances exhaust their $\sqrt{N}$ budget, and OTRO (GO+CT) stays near the ideal SqrtORAM curve. The key difference is that tokenization accounts for under 3\% of baseline TTFT for all three models, yet the naive PathORAM and SqrtORAM (np=1) inflates this significantly. Qwen3 generates more tokens per word (4.89 on average) due to its smaller merge table, lengthening GPU prefill and further diluting tokenizer overhead; OTRO limits TTFT increase to just 0.5\% (tokenizer accounts for 3.2\% of TTFT). Gemma-3 has 2x larger vocab and merge tables than Llama-3.1 but allocates more budgets to non-English scripts and its smaller model size (4B) keeps GPU prefill time comparable to Llama-3.1; GO's long-prompt TTFT spike is avoided by OTRO's chunked tokenization overlap. Phi-3 has the smallest absolute TTFT due to its compact tokenizer and model, but its normalized TTFT trends and ORAM configuration ordering are consistent with other models.

\bheading{Access-Count Padding Overhead. } Access-count padding adds at most $\sqrt{N}$-1 dummy accesses per epoch. Given our measured per-access latency of $\sim 3.2\mu s$ and $\sqrt{N}\leq881$ across all evaluated models, the worst-case padding overhead is $881 * 3.2\mu s \approx2.8ms$ per epoch; this is less than 0.4\% of the Phi-3 rebuild latency (116 ms) and negligible relative to TTFT. Thus, we omit padding overhead from TTFT figures. 

\bheading{Detokenization ORAM Latency.} The ORAM that protects the detokenization vocab table stores longer data blocks, since each token's textual form is padded to the maximum vocabulary string length (e.g., Llama-3.1's longest vocab entry is 256~B). This detokenization ORAM touches its table only once during TTFT and once per decoding step to map generated token IDs back to text, so its cost is amortized over the entire decoding phase. We do not observe any systematic impact on TTFT or decoding throughput attributable to detokenization ORAM. 
The baseline detokenization throughputs are 21.23, 29.61, 33.08, 23.85 ms/token for Llama-3.1, Qwen3, Gemma-3, and Phi-3, respectively. Detokenizer with PathORAM and SqrtORAM shows 21.53 and 21.34 ms/token for Llama-3.1, 29.61 and 30.14 ms/token for Qwen3, 34.28 and 33.56 ms/token for Gemma-3, and 23.52 and 22.97 ms/token for Phi-3. Lastly, OTRO for detokenizer performs 23.10, 29.56, 33.55, and 21.38 ms/token for Llama-3.1, Qwen3, Gemma-3, and Phi3, respectively. 
All of these differences are within the normal variance of the end-to-end decoding latency, which is on the order of 20–40\,ms per step. Thus, we conclude that detokenization ORAM latency is negligible in the overall LLM pipeline.

\subsection{System Level Cost}
\bheading{Memory Footprint.} 
We measure end-to-end memory impact using process resident set size (RSS), recorded after (1) loading the model and baseline tokenizer, and (2)constructing the ORAM structures; then, we report the difference, as shown in Table~\ref{tab:mem}. The baseline, non-oblivious tokenizer uses 60.1\,MB, 92.0\,MB, 137.0\,MB, and 16.0\,MB of memory for Llama-3.1, Qwen3, Gemma-3, and Phi-3, respectively, to store both the tokenizer and detokenizer tables inside the TEE. PathORAM increases the tokenizer footprint due to its tree layouts and dummy blocks, whereas SqrtORAM with a single instance remains within $\approx$12\,MB of the baseline. Our proposed design, OTRO, requires an extra 285.9\,MB for Llama-3.1 (48 instances), 46.0\,MB for Qwen3 (17 instances), 333.0\,MB for Gemma-3 (28 instances), and 30.5\,MB for Phi-3 (26 instances). Since OTRO instantiates an independent pool per client, the memory overhead scales linearly with the number of concurrent clients; at under 0.5\,GB per client, a 256\,GB CVM can support tens of concurrent clients without memory pressure. For detokenizer, \otro{} keeps only two SqrtORAM instances, resulting in 7.68\,MB, 8.34\,MB, 22.9\,MB, and 4.0\,MB for Llama-3.1, Qwen3, Gemma-3, and Phi-3, respectively, and we observe no systematic impact on decoding throughput across all models.

\bheading{Initialization Latency.} Figure~\ref{fig:init} depicts the one-time initialization cost of installing ORAM metadata and the tokenizer tables at model load time. For Llama-3.1, the baseline initialization is $\approx0.46\,s$, and \otro{} increases this to $\approx0.64\,s$ (39.7\% increase). The initialization cost grows by 46.9\% (from $\approx0.40\,s$ to $\approx0.59\,s$), 86.3\% ($\approx1.25\,s$ to $\approx2.34\,s$), and 351.9\% ($\approx0.04\,s$ to $\approx0.15\,s$) for Qwen3, Gemma-3, and Phi-3. Since initialization happens once per model load, we do not treat the initialization cost as a major system bottleneck.

\begin{table}[!t]
\centering
\caption{Residual leakage (20K ShareGPT, Llama-3.1).  \\
$\ell$ = byte length. Higher $H$ = more privacy.}
\begin{tabular}{lcc}
\hline
Configuration & $H(\text{prompt}|\text{obs})$ & 
$I(\text{access\_count};\text{prompt}|\ell)$ \\
\hline
No ORAM            & 0.00 bits & ---       \\
\otro{} (w/o pad)     & 1.60 bits & 3.94 bits \\
\otro{}               & 5.49 bits & 0.04 bits \\
$\ell$ only        & 5.53 bits & 0.00 bits \\
\hline
\end{tabular}
\label{tab:leakage}
\end{table}

\subsection{Residual Leakage Quantification}
\label{sec:eval:leak}
\bheading{Modeling.} We quantify the information available to an adversary observing \otro{}'s metadata under a closed-world model: we sample 20K prompts from the ShareGPT corpus, tokenize with Llama-3.1's BPE tokenizer, and model prompt selection as uniform over the corpus ($H(\text{prompt}) = \log_2(20{,}000) = 14.29$ bits) of prior uncertainty. 
This closed-world assumption overstates what the adversary can infer, as real prompt spaces are vastly large; our analysis is a \textbf{conservative upper bound} on adversary advantage, not a lower bound on security.

\bheading{Observables.} After ORAM obfuscation, the adversary's non-trivial observables reduce to two scalar quantities: prompt byte length and total ORAM access count, both inferable from the DRAM trace. Since BPE table lookups scale linearly with input length ($R^2=0.9915$, $\text{access\_count} = 3.20 \times \text{byte\_length} + 11.26$), the adversary can directly invert the access count to recover prompt length. Epoch count is a deterministic function of total access count and the public parameter $\sqrt{N}$, and chunk count is a deterministic function of total access count and public chunk size; neither contributes independent information beyond access count. Furthermore, \otro{} fuses the vocabulary and merge tables into a single ORAM structure, making individual vocab and merge table accesses physically indistinguishable. The adversary therefore cannot observe per-word merge pass counts or distinguish vocab lookups from merge lookups; only the aggregate access count is visible.

\bheading{Entropy analysis.} We model prompt selection as uniform over the corpus, giving the adversary an initial uncertainty of $H(\text{prompt})=14.29$ bits, equivalent to 20,000 equally-likely candidates. Observing byte length alone resolves 8.75 bits, leaving $H(\text{prompt | byte\_length}) = 5.53$ bits of remaining uncertainty. Observing the access count in addition resolves a further 3.94 bits ($I(\text{access\_count}; \text{prompt} | \text{byte\_length}) = 3.94 $ bits), leaving $H(\text{prompt | byte\_length, access\_count}) = 1.60$ bits, approximately 3 candidates on average. Critically, 8.75 bits of the 12.69 bits the adversary resolves come from byte length alone, a channel that is independent of \otro{} and unavoidable in any practical ORAM deployment since the total access count is a deterministic function of input length. \otro{}'s access count contributes only 3.94 additional bits beyond what byte length already reveals, and the near-perfect linear relationship between byte length and access count ($R^2=0.9915$) confirms that these two observables are largely redundant.

\bheading{Length-stratified results.} Table~\ref{tab:leakage} reports the adversary's residual uncertainty stratified by prompt length. Short prompts are most private: for prompts under 100 bytes, 2.49 bits of uncertainty remain because many distinct short prompts share similar byte lengths and access counts, limiting the adversary's ability to distinguish among them. As prompt length increases, byte length becomes increasingly identifying on its own. For prompts exceeding 500 bytes, $H(\text{prompt | byte\_length})$ drops to 1.10 bits or below, meaning the adversary retains almost no uncertainty even before access count is considered. The near-zero residual uncertainty for long prompts is therefore driven primarily by byte length observability, not by \otro{}'s access count.

\bheading{Access-count padding.} The 3.94 bits contributed by the raw access count beyond byte length can be further reduced through a natural extension of \otro{}'s epoch structure. Since each SqrtORAM epoch already spans exactly $\sqrt{N}$ accesses before instance rotation, \otro{} can optionally pad dummy accesses (Section~\ref{sec:idea:access}), revealing only the epoch number $\lceil \text{access\_count} / \sqrt{N} \rceil$ rather than the exact access count. This coarsening is consistent with \otro{}'s existing design: the adversary already observes instance switches at epoch boundaries, so the epoch number is an already-observable quantity that carries no additional information.

Table~\ref{tab:leakage} shows the effect of padding. The additional leakage beyond byte length drops from 3.94 bits to just $0.04$ bits — within 0.04 bits of the theoretical minimum achievable by any system that leaks byte length (0.00 bits). Residual uncertainty $H(\text{prompt} | \text{observable})$ rises from 1.60 bits to 5.49 bits, recovering 38.4\% of the original prompt entropy compared to 11.2\% without padding. The per-bucket improvement is particularly striking for short prompts, where additional leakage from access count drops from 4.48 bits to 0.00 bits, and for medium prompts (100--500 bytes), where it drops from 4.07 bits to 0.11 bits. For long prompts ($>$500 bytes), byte length already dominates and padding contributes negligible additional benefit, consistent with $H(\text{prompt} \mid \text{byte\_length})$ being near zero in that regime regardless.

\bheading{Comparison to unprotected baseline.} Without ORAM protection, the adversary observes the complete sequence of vocabulary and merge-table indices, enabling full prompt reconstruction as demonstrated in Section~\ref{sec:motivation:case}, leaving $H(\text{prompt | full trace}) \approx 0$ bits of residual uncertainty; the adversary identifies the prompt exactly regardless of its length. Table~\ref{tab:leakage} summarizes the leakage under each configuration. \otro{} without padding raises this residual uncertainty from 0 to 1.60 bits overall; \otro{} with access-count padding raises it further to 5.49 bits. In both configurations, \otro{} replaces the fine-grained, per-token access pattern (which enables exact reconstruction) with coarse, length-correlated metadata. The residual identification risk for long prompts exists in any system that leaks input length, which every practical ORAM does, and is not a weakness introduced by \otro{}.


\section{Related Work}

\bheading{Side-channel Attacks in CVM.} Side-channel attacks against TEEs in cloud environments are more powerful than non-TEE-based attacks. The attackers can access privileged interfaces (e.g., performance counters~\cite{gotzfried2017cache} and power consumption analyzers~\cite{lipp2021platypus}) and intentionally control global resources~\cite{brasser2017software,dall2018cachequote,moghimi2017cachezoom,schwarz2017malware,van2018nemesis} to gather more fine-grained information on the TEE. In particular, ciphertext side-channel attacks against AMD SEV allow the attackers to monitor ciphertexts to infer secrets~\cite{li2022systematic,li2021cipherleaks}. Current research demonstrates that side-channel attacks are serious threats to TEE environments. Thus, countermeasures to protect data and code running inside TEEs are necessary. 

\bheading{ML Embedding Table.} \cite{laoram,hashemi2022data} apply ORAM to mitigate access-pattern leakage in embedding tables in ML workloads and \cite{umar2025efficient} uses deep hash embedding as an alternative. These techniques operate after tokenization, on tokenIDs, whereas \otro{} targets the tokenizer.

\bheading{Private Information Retrieval and Oblivious Map.} PIR \cite{chor1998private,kushilevitz1997replication} provides protocol-level index hiding that allows a client to retrieve a database item without revealing the queried index to the server. PIR protects the logical query at the protocol layer, but does not inherently guarantee memory-oblivious execution under a low-level side-channel adversary observing page faults or DRAM traces. Oblivious maps \cite{mishra2018oblix,tinoco2023enigmap} provide a key-value abstraction built on top of ORAM that hides which key is accessed during lookup or update. While oblivious maps present a higher-level interface, their security ultimately relies on the underlying ORAM’s access-pattern indistinguishability. In contrast to PIR’s protocol-level query privacy and oblivious maps’ key–value abstraction, our work operates at the memory-access abstraction within a CVM.

\bheading{Traditional ORAM.} Many ORAM works like \cite{ren2013design,ren2015constants,raoufi2022ir,raoufi2023ep,raoufi2023ab,rajat2022pageoram,liu2022ps,he2020h2oram,che2020multi,cao2021streamline,asharov2020optorama} target the traditional client-server model without TEEs. Our system assumes ORAM inside TEEs; thus, these works are orthogonal to ours.

\bheading{TEE-based ORAM.} ZeroTrace~\cite{zerotrace}, Obliviate~\cite{ahmad2018obliviate}, Oblix~\cite{mishra2018oblix} and \cite{tran2018obscuro,laoram,zheng2025h2o2ram} are based on an earlier generation of TEEs, Intel SGX~\cite{sgx}. With memory constraints of SGX, neither SqrtORAM or PathORAM cannot be placed inside the TEE-protected memory. Thus, operate in a different regime from VM-based TEEs. PRO-ORAM~\cite{tople2018pro} builds on SqrtORAM and overlaps accesses with rebuilds to support read-only workloads but targets SGX-enabled cloud storage servers and optimizes constant-latency access to large files in a network setting. In contrast, \otro{} uses a pool of SqrtORAM instances as a fine-grained protection layer for tokenizer-table lookups inside a CVM, with the objective to minimize TTFT for real-time inference. As a result, while both systems exploit read-only data and pipelined rebuild, they address different deployment models, threat surfaces, and latency objectives.

A recent work, Menhir~\cite{reichert2024menhir}, is, to our knowledge, the first system to leverage VM-based TEE in implementing ORAM logic, with Intel TME protecting the ORAM structure in TEE-protected DRAM. Obelix~\cite{obelix} identifies the deterministic behavior in AES-XTS-based Intel TME as a limitation and resolves it by attaching a 64-bit counter that adds probabilistic variation to AES-XTS, effectively randomizing the ciphertext for repeated accesses to the same physical location. These works are complementary to our design, hardening the security guarantees in ORAM logic.
\section{Conclusion}
Tokenizers inside TEEs remain a real, exploitable source of side-channel leakage. Greedy BPE tokenization yields deterministic, data-dependent table accesses, producing DRAM address traces that a privileged adversary can observe. Recent work has turned this into an end-to-end attack that reconstructs user prompts on production Intel TDX. In this paper, we mitigate this channel and provide efficient tokenizer-tailored solution with \otro{}. Our central observation is that the tokenizer's vocabulary and merge tables are read-only during inference, which lets us circumvent the periodic rebuild overhead inherent to in SqrtORAM. \otro{} removes this from the critical path in three steps. First, \otro{} replicates the tables across a pool of independent SqrtORAM instances. Second, an epoch-based rotation runs the rebuild of depleted instances asynchronously in the background, while continuously serving incoming tokenizer requests from an available instance. Third, chunked tokenization interleaves the tokenization of prompt chunks with GPU prefill, overlapping CPU rebuild work with GPU computation to further reduce the number of instances the pool holds. Together, these techniques turn the bursty rebuild cost of SqrtORAM into background work that does not stall the serving pipeline. Our prototype on Llama-3.1, Qwen3, Gemma-3, and Phi-3 keeps TTFT overhead within 4.5\% of the unprotected baseline while reducing the observable trace to prompt length alone, demonstrating that workload-aware ORAM integration is a viable path to end-to-end confidentiality in production LLM-serving stacks.



\ifCLASSOPTIONcompsoc
\else
\fi




%
\bibliographystyle{IEEEtran}
\bibliography{references}

\begin{appendices}
\label{sec:appendix}

\section{Security Analysis}
\label{sec:appendix:security}
Here we provide the proofs for two lemmas used in the proof sketch of Theorem~\ref{the:1} in Section~\ref{sec:security}.

\begin{lemma}[Per-Access Trace Indistinguishability]
\label{lemma:1}
For all $i \in [N]$, the random variables $\trac{\addr{1}{i}} \ \text{and}\ \trac{\addr{2}{i}}$, denoting the physical DRAM access traces induced by the $i$-th logical access in executions $\exe{1}$ and $\exe{2}$, are identically distributed, i.e.,
\[
\trac{\addr{1}{i}} \;\equiv_d\; \trac{\addr{2}{i}} .
\]

\begin{proof}
    We analyze the physical DRAM access trace generated by the $i$-th logical access and show that it is identically distributed for $\exe{1}$ or $\exe{2}$. We decompose the trace into two components: 1) accesses incurred during the ORAM read operation, and 2) accesses incurred during reshuffling.

    \bheading{Read phase.} For the ORAM read phase, the controller deterministically selects the SqrtORAM instance to access based solely on the logical access index $i$. Specifically, the $j$-th SqrtORAM instance is selected, where
    \[
        j = \left\lfloor \frac{i-1}{\sqrt{N}} \right\rfloor \bmod n_{pool} +1
    \]
    and $n_{pool}$ denotes the number of replicated SqrtORAM instances. Since this selection depends only on $i$ and public system parameters, the same SqrtORAM instance is accessed in executions $\exe{1}$ and $\exe{2}$.
    Within the selected SqrtORAM, the physical location accessed is determined by evaluating a pseudorandom function on the logical address. Pseudorandomness of the PRF makes the resulting locations computationally indistinguishable from a uniform random choice over all physical locations in the SqrtORAM. Consequently, for any physical location $\ell$ in the SqrtORAM, the probability that $\ell$ is accessed during the read phase is
    \begin{align}
        \Pr[\text{access } \ell] = \frac{1}{N + \sqrt{N}} \nonumber
    \end{align} In addition, the controller performs a linear scan of the stash during every read operation. This scan accesses the same sequence of physical locations regardless of the logical address being accessed, and hence contributes to identical distribution in $\trac{\addr{1}{i}}$ and $\trac{\addr{2}{i}}$. Therefore, the distribution of physical DRAM accesses generated during the ORAM read phase of the $i$-th logical access is identical in executions $\exe{1}$ and $\exe{2}$.

    \bheading{Reshuffling.} We next consider the accesses generated by reshuffling. The system triggers an oblivious reshuffle of the $j$-th SqrtORAM instance after $\sqrt{N}$ logical accesses to that instance for fixed public parameters $j$ and $N$. Since the reshuffle schedule depends only on the logical access index $i$ and public parameters, reshuffles are triggered at the same logical times in executions $\exe{1}$ and $\exe{2}$.
    Moreover, the reshuffling algorithm is oblivious by construction~\cite{batcher1968sorting}, meaning that the sequence of physical DRAM accesses it generates depends only on the size of the data structure and not on its contents or access history. As a result, the distribution of DRAM access traces generated by reshuffling is identical across executions $\exe{1}$ and $\exe{2}$.

    Combining the above arguments, both the ORAM read accesses and the reshuffling accesses generated by the $i$-th logical access are identically distributed in executions $\exe{1}$ and $\exe{2}$. Therefore, the overall physical DRAM access trace distribution of $\trac{\addr{1}{i}}$ is identical to that of $\trac{\addr{2}{i}}$.
\end{proof}
\end{lemma}

\begin{lemma}[Independence of Logical Access Traces]
\label{lemma:2}
The physical DRAM access traces generated by distinct logical accesses to the
vocab and merge tables are mutually independent random variables.

\begin{proof}
    Consider two distinct logical access indices $i \neq i'$, corresponding to logical addresses $\addr{}{i}$ and $\addr{}{i'}$. We show that the physical DRAM access traces generated by these accesses are independent. We distinguish two cases for the read phase and reshuffling depending on whether the accesses are served by the same SqrtORAM instance.
    
    \bheading{Read phase Case 1: Accesses served by the same SqrtORAM instance.} Suppose that  $\addr{}{i}$ and $\addr{}{i'}$ are mapped to the same SqrtORAM instance. By construction, each logical access invokes $\prf$ to determine the physical access location within the SqrtORAM. The security guarantee of SqrtORAM ensures that the physical access locations corresponding to distinct logical accesses are computationally indistinguishable from independent random locations. Consequently, the physical DRAM access traces generated by accesses $\addr{}{i}$ and $\addr{}{i'}$ are independent random variables.

    \bheading{Read phase Case 2: Accesses served by different SqrtORAM instances.} Next, suppose that $\addr{}{i}$ and $\addr{}{i'}$ are mapped to different SqrtORAM instances. Each SqrtORAM instance is initialized independently using independent secret keys and randomness. Therefore, the pseudorandom function evaluations used to select physical access locations in different SqrtORAM instances are independent. Accordingly, the physical DRAM access traces generated by accesses to different SqrtORAM instances are independent.
    
    \bheading{Reshuffling accesses.} Finally, we consider accesses generated by reshuffling. Reshuffling is triggered according to a fixed, public schedule that depends only on the number of logical accesses performed. The reshuffling algorithm is oblivious, and the sequence of physical DRAM accesses it generates depends only on public parameters. Consequently, reshuffling accesses are independent of the logical accesses that trigger them, as well as of reshuffling operations triggered at other times.
    
    Combining the above cases, the physical DRAM access traces generated by any two distinct logical accesses are independent random variables.
\end{proof}
\end{lemma}

\end{appendices}

\end{document}